\newtheorem{proposition}{Proposition}
\newtheorem*{proposition*}{Proposition}
\newtheorem{definition}{Definition}
\newtheorem{theorem}{Theorem}
\newtheorem*{theorem*}{Theorem}
\newtheorem*{corollary*}{Corollary}
\newcommand{\ket}[1]{\left\vert#1\right\rangle}
\newcommand{\bra}[1]{\left\langle#1\right\vert}
\newcommand{\abs}[1]{\left|#1\right|}
\newcommand{\ketbra}[2]{\vert #1 \rangle\langle #2 \vert}
\newcommand{\braket}[2]{\langle #1 | #2 \rangle}
\def\bra#1{\langle #1|}
\def\ket#1{\left|#1 \right>}
\def\Tr{\mbox{Tr}}
\begin{document}
\title{Quantifying the Coherence Between Coherent States}

\author{Kok Chuan Tan}
\email{bbtankc@gmail.com}
\affiliation{Center for Macroscopic Quantum Control, Department of Physics and Astronomy, Seoul National University, Seoul, 151-742, Korea}
\author{Tyler Volkoff}
\affiliation{Department of Physics, Konkuk University, Seoul 05029, Korea}
\author{Hyukjoon Kwon}
\affiliation{Center for Macroscopic Quantum Control, Department of Physics and Astronomy, Seoul National University, Seoul, 151-742, Korea}
\author{Hyunseok Jeong}
\email{h.jeong37@gmail.com}
\affiliation{Center for Macroscopic Quantum Control, Department of Physics and Astronomy, Seoul National University, Seoul, 151-742, Korea}

\date{\today}

\begin{abstract}
In this paper, we detail an orthogonalization procedure that allows for the quantification of the amount of coherence present an arbitrary superposition of coherent states.
The present construction is based on the quantum coherence resource theory introduced by Baumgratz {\it et al.}~\cite{Baumgratz14}, and the coherence resource monotone that we identify is found to characterize the nonclassicality traditionally analyzed via the Glauber-Sudarshan $P$ distribution. This suggests that identical quantum resources underlie both quantum coherence in the discrete finite dimensional case and the nonclassicality of quantum light. We show that our construction belongs to a family of resource monotones within the framework of a resource theory of linear optics, thus establishing deeper connections between the class of incoherent operations in the finite dimensional regime and linear optical operations in the continuous variable regime.
\end{abstract}

\pacs{}

\maketitle

\section{Introduction} 

The remarkable differences between the classical theories and quantum theories have long been a captivating and fruitful area of study for physicists, culminating in recent decades in the realization that such differences can used to perform a variety of useful informational tasks~\cite{NielsenChuang}. Subsequent developments have led to the identification and quantification of nonclassical quantum properties such as quantum entanglement~\cite{Horodecki2001}, nonlocality~\cite{Brunner2014} and quantum discord~\cite{Ollivier2001}. These remain intense areas of research, with new applications that exploit these nonclassical properties still being found~\cite{Lloyd2008,Chuan2013, Dakic2012, Girolami2014, Acin2016}.

A fairly recent development in the quantum resources arena is the introduction of a resource theory of quantum coherence by Baumgratz {\it et al.}~\cite{Baumgratz14}. This resource theory draws its primary inspiration from a similar program that was successful in the study of entanglement~\cite{Plenio07}. Adopting this approach for quantum coherence necessarily requires the assumption of some natural basis which is assumed to form an orthonormal set $\{ \ket{i} \}$ where states $\ket{i}$ are typically considered to be ``classical". Since the initial proposal by Baumgratz {\it et al.}~\cite{Baumgratz14}, other variations of such resource theories have also been explored~\cite{Chitambar2017}. For our purposes, we will limit our attention to the version originally proposed in~\cite{Baumgratz14}. Recent literature have applied this resource theory to the study of a diverse range of topics, such as quantum correlations~\cite{Tan2016, Streltsov15}, interferometric experiments~\cite{Wang2017} and quantum estimation~\cite{Giorda2016}.

The aforementioned resource theory of coherence typically considers finite dimensional quantum systems. At the opposite end of the spectrum, we may also consider its application in the infinite dimensional (or continuous variable) regime, of which quantum systems of light is a prime example. Recent attempts to quantify the coherence in such systems have mainly focused their attention on the Fock basis $\{ \ket{n} \}$, assuming that Fock diagonal states are the free ``classical" resource in the infinite dimensional regime~\cite{Zhang2015,Xu2016}.

This approach is however, in diametric opposition to the traditional notions of classical light based on the Glauber-Sudarshan P representation of the state of the electromagnetic field. Indeed, Fock states are decidedly nonclassical~\cite{Sperling2016, Ryl2017}. The most general notion of classical light have already been extensively studied since the 1960s~\cite{Glauber1963a,Glauber1963b,Glauber1963c,Klauder1968}, and it is well established that 
the quantum states of light that most closely resemble classical light fields, both in photon statistics and dynamics, are the so-called coherent states~\cite{Glauber1965}. It is therefore desirable that any quantification of quantum coherence for quantum states of light involves the set of coherent states. Unfortunately, the set of coherent states is overcomplete; in particular, the coherent states do not form a mutually orthonormal set, and therefore do not permit the direct application of the resource theoretical approach outlined in~\cite{Baumgratz14}. 

The question is then the following: suppose we would like to consider the coherence with respect to the set of coherent states $\{ \ket \alpha \}$, how do we quantify this and in what sense is it nonclassical? In this paper, we propose a resolution to this problem. In so doing, we will demonstrate that the quantum resource identified by Baumgratz {\it et al.}~\cite{Baumgratz14} is essentially the same as the notion of nonclassicality identified by Glauber~\cite{Glauber1965}. We will also demonstrate that this nonclassical resource is closely related to what we refer to as a resource theory of linear optics.


\section{Preliminaries}

We will adopt the axiomatic approach for coherence measures as shown in Ref.~\cite{Baumgratz14}. The essential ingredients are as follows.

For a fixed basis $\{ \ket{i} \}$, the set of incoherent states $\cal I$ is the set of quantum states with diagonal density matrices with respect to this basis. Given this, we say that  $\mathcal{C}$ is a measure of quantum coherence if it satisfies following properties:
(C1) $C(\rho) \geq 0$ for any quantum state $\rho$ and equality holds if and only if $\rho \in \cal I$.
(C2a) The measure is non-increasing under incoherent completely positive and trace preserving maps (ICPTP) $\Phi$ , i.e., $C(\rho) \geq C(\Phi(\rho))$.
(C2b) Monotonicity for average coherence under selective outcomes of ICPTP:
$C(\rho) \geq \sum_n p_n C(\rho_n)$, where $\rho_n = \hat{K}_n \rho \hat{K}_n^\dagger/p_n$ and $p_n = \Tr [\hat{K}_n \rho \hat{K}^\dagger_n ]$ for all $\hat{K}_n$ with $\sum_n \hat{K}_n \hat{K}^\dagger_n = \mathbb 1$ and $\hat{K}_n {\cal I} \hat{K}_n^\dagger \subseteq \cal I$.
(C3) Convexity, i.e. $\lambda C(\rho) + (1-\lambda) C(\sigma) \geq C(\lambda \rho + (1-\lambda) \sigma)$, for any density matrix $\rho$ and $\sigma$ with $0\leq \lambda \leq 1$.

We will also frequently make reference to the set of coherent states which we denote by $\{ \ket{\alpha} \}$. (For an overview, see for instance~\cite{Adesso2014}). It is known that every quantum state of light $\rho$ permits a representation that is diagonal with respect to coherent states, i.e.

$$
\rho = \int d^2\alpha P(\alpha)\ket{\alpha}\bra{\alpha}
$$

where the coefficient $P(\alpha)$ is called the Glauber-Sudarshan $P$ distribution~\cite{Titulaer1965}. The P distribution always sums to 1 but may display negativities, in which case it is considered nonclassical. On the other hand, $P$ distributions that exhibit the properties of a classical, nonnegative probability distribution are considered to have classical analogues.

Finally, we will also make references to \textit{linear optical operations}, which will require some clarification. Here, we specifically take this term to refer to the set of
passive unitary optical operations that can be performed using basic building blocks of beam splitters, phase shifters, half and quarter wave plates as described in Ref~\cite{Kok2007} supplemented with displacement operations, defined by $D(\alpha) \coloneqq e^{(\alpha a^\dag - \alpha^* a)}$. In contrast, the most general linear transformation of the Bogoliubov type includes operations such as squeezing operations, that can give rise to highly nonclassical light. In our context, the defining property of such a linear optical operation is that if 
the input quantum state is given by pure, classical light of the form $\ket{\vec{\alpha}} = \ket{\alpha^1} \ldots \ket{\alpha^k}$, then the output state is also pure and classical, i.e., if $U$ is a unitary linear optical operation, then $U\ket{\vec{\alpha}} = \ket{\vec{\beta}} =  \ket{\beta^1} \ldots \ket{\beta^k}$.

\section{Example for pure states}

The key idea that we will present here is to preprocess a general quantum state using an orthogonalization procedure which shares some superficial similarities with the Gram-Schmidt orthogonalization procedure from linear algebra. We first illustrate the process using some orthogonal basis states, and show that this procedure, when applied to an orthogonal basis, can be interpreted as a generalization of the concept of coherence proposed by Baumgratz \textit{et al.}~\cite{Baumgratz14}.

Consider some orthogonal basis $\{ \ket{i} \}$ with $i = 1, \ldots, N$ in some $N$ dimensional Hilbert space and some arbitrary quantum state $\ket{\psi} = \sum_i c_i \ket{i}$. Without any loss in generality, we assume that the coefficients are in decreasing order, so $\abs{c_i}\geq \abs{c_{i+1}}$. We now describe a preprocessing procedure. We define a CNOT type operation performing the operation $U_i \ket{i}\ket{0} = \ket{i}\ket{i}$. 

Suppose we perform a series of such CNOT type operations starting from the basis state with the largest overlap with $\ket{\psi}$, so $U = U_N \ldots U_1$. Applying this unitary, the final result is the state $U\ket{\psi} = \sum_i c_i \ket{i}\ket{i}$. We note that the coherence with of $U\ket{\psi}$ in the basis $\{\ket{i}\ket{i}\}$ is the same as the coherence of  $\ket{\psi}$ in the basis $\{\ket{i}\}$. Therefore, from the perspective of coherence, the preprocessing procedure is completely superfluous, and the coherence before and after the process is completely described by the same coefficients $c_i$.

In the above example, the unitary procedure turned out to be extraneous because the initial basis states are chosen to be orthonormal. However, when the initial reference set of states is not an orthonormal set, such as when the set of states considered are the coherent states $\{ \ket \alpha \}$, then we see that the operation may not be trivial. Suppose we have some initial pure state $|\psi\rangle$. If one were to similarly define a series of CNOT type operations as before, with the exception that the control states are drawn from the non-orthonormal set $\{ \ket \alpha \}$, we see that the resulting state will have the form $U\ket{\psi}\ket{0} = \sum_i c'_i \ket{\alpha'_i}\ket{\beta'_i}$, where the set of states $\{ \ket{\alpha'_i}\ket{\beta'_i} \}$ will be orthonormal so long as $\braket{\beta_i}{\beta_j} = \delta_{ij}$. We note that this orthogonality condition can always be strictly enforced by an encoding across multiple spatial/polarization modes, but for notational simplicity, we will instead use some set of sufficiently well separated coherent states within a single mode, $\{ \ket{\beta_i} \}$ , which can be chosen to be arbitrarily close to orthonormal.

We now describe the orthogonalization procedure with respect to the set of coherent states in detail. Following the same argument as above, let us define $\ket{\psi_i}$ through the recursion relation $\ket{\psi_i} = \ket{\psi_{i-1}} -  \ket{\alpha_{i-1}}\braket{\alpha_{i-1}}{\psi_{i-1}}$, where the coherent state $\ket{\alpha_i}$ satisfies $\braket{\alpha_i}{\psi_i} = \max_{\alpha'}\braket{\alpha'}{\psi_i}$ and the initial state $\ket{\psi_1} = \ket{\psi}$ is some given pure quantum state of interest. 

Given some finite series of vectors $\{ \ket{\alpha_i} \}$ where $i = 1,\ldots,N$, let the CNOT type unitary be defined to be $U_{\alpha_i} \coloneqq \ketbra{\alpha_i}{\alpha_i} \otimes \ketbra{\beta_i}{0} + \ketbra{\alpha_i}{\alpha_i} \otimes \ketbra{0}{\beta_i} + (\openone \otimes \openone - \ketbra{\alpha_i}{\alpha_i}\otimes \ketbra{0}{0} - \ketbra{\alpha_i}{\alpha_i}\otimes \ketbra{\beta_i}{\beta_i} )$. This definition essentially performs the map $U_{\alpha_i}\ket{\alpha_i}\ket{0} =\ket{\alpha_i}\ket{\beta_i}$. From this, we can construct the unitary map just as before: $U_{GS} = U_{\alpha_{N}} \ldots U_{\alpha_{1}}$.

We will call $U_{GS}$ the Gram-Schmidt unitary, since it performs an orthogonalization process. The end result is some orthogonal subspace spanned by $\{ \ket{\alpha_i}\ket{\beta_i} \}$ where $i=1,\ldots, N$. Within this $N$ dimensional subspace, the discrete finite dimensional formulation of coherence will then apply. 

\section{Generalization to mixed states.}

The following is a construction of $U_{GS}$ which will appropriately generalize the definition for mixed states:

\begin{definition} [Gram-Schmidt Unitary] \label{def::GSUnitary}
For a given density matrix $\rho_A$, let $\rho^{(0)}_{AB} = \rho_A \otimes \ket{0}_B\bra{0}$. 

We now define $\ket{\alpha^{(i)}}$ to be some coherent state  achieving the optimal value $\mathrm{Tr}(\ket{\alpha^{(i)}}\bra{\alpha^{(i)}} \otimes \ket{0}_B\bra{0}\rho^{(i)}_{AB}) = \max_\alpha \mathrm{Tr}(\ket{\alpha}_A\bra{\alpha} \otimes \ket{0}_B\bra{0}\rho^{(i)}_{AB})$, where $\rho^{(i)} \coloneqq U_{\alpha^{(i-1)}}\rho^{(i-1)}U^\dag_{\alpha^{(i-1)}}$ and $U_{\alpha_i} \coloneqq \ketbra{\alpha_i}{\alpha_i} \otimes \ketbra{\beta_i}{0} + \ketbra{\alpha_i}{\alpha_i} \otimes \ketbra{0}{\beta_i} + (\openone \otimes \openone - \ketbra{\alpha_i}{\alpha_i}\otimes \ketbra{0}{0} - \ketbra{\alpha_i}{\alpha_i}\otimes \ketbra{\beta_i}{\beta_i})$ is a CNOT type unitary. We assume that $\{\ket{0}, \ket{\beta_i} \}$ forms some set of mutually orthonormal vectors.

Let $N > 1$ be some integer. Then the following unitary:

$$
U^{(N)}_{GS} = U_{\alpha^{(N)}} \ldots U_{\alpha^{(0)}}
$$

is called the $N$th Gram-Schmidt unitary. Note that in general, $U^{(N)}_{GS}$ depends on the state $\rho_A$.
\end{definition}

In the case of degeneracy, where more than one coherent state may achieve $\max_\alpha \mathrm{Tr}(\ket{\alpha}_A\bra{\alpha} \otimes \ket{0}_B\bra{0}\rho^{(i)}_{AB})$, the choice of unitaries above is not necessarily unique. To accommodate this, we will also define the set of all possible choices of such unitaries  $\mathcal{S}^{(N)}$. We can also generalize to the case of multimode states by considering $\ket{\vec{\alpha_i}} \coloneqq \ket{\alpha_i^1}\ldots \ket{{\alpha_i^k}}$ in place of $\ket{\alpha_i}$, so that our treatment here can be made as general as possible.

After the orthogonalization process, a pure state will have the form $U\ket{\psi}\ket{0} = c_0\ket{\epsilon}\ket{0}+\sum_{i=1}^N c_i \ket{\alpha_i}\ket{\beta_i}$, where the set of states $\{ \ket{\alpha_i}\ket{\beta_i} \}$ will be orthogonal. The vector $\ket{\epsilon}\ket{0}$ represents the portion of the vector space that is not orthogonalized by the $N$th Gram-Schmidt unitary., which we can always remove by projecting onto the subspace spanned by  $\{ \ket{\alpha_i}\ket{\beta_i} \}$. We introduce the following quantity:

\begin{definition} [$N$-coherence]
For some discrete finite dimensional coherence measure $\mathcal{C}$, we define the $N$-coherence $\mathcal{C}_{\alpha}$ for a pure state $\ket{\psi}$ to be:

$$
\mathcal{C}_{\alpha}(\ket{\psi}, N) = \inf_{U^{(N)}_{GS}\in \mathcal{S}^{(N)}} \mathcal{C}[\Phi^{(N)}_{GS}(\ketbra{\psi}{\psi}) ]
$$

where $\Phi^{(N)}_{GS}(\rho) =  \Pi^{(N)}_{GS}(U^{(N)}_{GS} (\rho\otimes \ketbra{0}{0})U^{(N)\dag}_{GS})\Pi^{(N)}_{GS} / \mathrm{Tr}(  \Pi^{(N)}_{GS}(U^{(N)}_{GS} (\rho\otimes \ketbra{0}{0})U^{(N)\dag}_{GS})\Pi^{(N)}_{GS})$ is called the $N$th Gram-Schmidt map. The projector $\Pi^{(N)}_{GS} \coloneqq \sum_{i=1}^N \ketbra{\alpha_i}{\alpha_i} \otimes \ketbra{\beta_i}{\beta_i} $ is the projection onto the $N$ dimensional subspace spanned by $\{ \ket{\alpha_i} \ket{\beta_i} \}$ where the vectors $\{ \ket{\alpha_i} \}$ and $\{\beta_i\} $ are the same vectors previously defined in Definition~\ref{def::GSUnitary}.  More generally, for any mixed quantum state $\rho_A$, we employ the following definition:

$$
\mathcal{C}_{\alpha}(\rho_A,N) = \inf_{(\rho_{AE}, U_{GS})\in (\mathcal{E},\mathcal{S}^{(N)})} \mathcal{C}[\Phi^{(N)}_{GS}(\rho_{AE})]
$$

where  $\mathcal{E} \coloneqq \{ \rho_{AE} \mid \mathrm{Tr}{\rho_{AE}} = \rho_A \}$ is the set of extensions of $\rho_A$. The coherence $\mathcal{C}$ is measured with respect to the set of orthogonal vectors $\{\ket{\vec{\alpha}_i}\ket{\beta_i}\}$ specified by $U^{(N)}_{GS}$.
\end{definition}

In general, we allow the the coherence measure $\mathcal{C}$ to be any finite dimensional coherence measure satisfying the axioms listed in Ref.~\cite{Baumgratz14}, with only one additional requirement. The coherence measure $\mathcal{C}$ should be asymptotically continuous in the sense that if some state $\rho$ has infinitesimally small coherence, then it is infinitesimally close to some incoherent state $\sigma$. That is, if we have some sequence of states $\rho^n$ such that $\lim_{n \rightarrow \infty}\mathcal{C}(\rho^n) = 0$ then for every $\epsilon > 0$, there is some $n_{max}$ such that for every $n>n_{max}$, there exists some incoherent state $\sigma^n$ such that $\frac{1}{2}\Vert \rho^n - \sigma^n \Vert_{tr} < \epsilon$. This is satisfied, for instance, by both coherence measures introduced in Ref.~\cite{Baumgratz14}. This is because both the $l_1$ norm~\cite{Deza2009} and the relative entropy~\cite{Audenaert2005} are lower bounded by the trace norm. 


Next, we define the $\epsilon$ smoothed version of the above quantity so as to consider states in the immediate vicinity of the state of interest.

\begin{definition} [$\epsilon$ smoothed $N$-Coherence]

The $\epsilon$-smoothed $N$-Coherence for some $\epsilon >0$ is the quantity:

$$\mathcal{C}_{\alpha}(\rho_A,N, \epsilon) \coloneqq \inf_{\rho'_A \in \mathcal{B}(\rho_A, \epsilon)} C(\rho'_A,N) $$

where $\mathcal{B}(\rho_A,\epsilon) = \{ \rho_A' \mid \frac{1}{2} \Vert \rho_A' - \rho_A \Vert_{tr} \leq \epsilon \}$ is the $\epsilon$ ball centred at $\rho_A$ with respect to the trace norm.
\end{definition}

Finally, the main figure of merit that we consider is the following:

\begin{definition} [$\alpha$-coherence] \label{def::alphaCoherence}

The $\alpha$-coherence is the limiting value of the $\epsilon$ smoothed $N$-Coherence:

$$
\mathcal{C}_{\alpha}(\rho_A) \coloneqq \lim_{\epsilon \rightarrow 0}\lim _{N \rightarrow \infty}\mathcal{C}_{\alpha}(\rho_A, N, \epsilon) .
$$

\end{definition}

In Definition~\ref{def::alphaCoherence}, we have combined the finite dimensional formulation of coherence with that of non-classical systems of light. The $\alpha$-coherence may therefore be interpreted as the limiting case of the coherence identified by Baumgratz \textit{et al.}~\cite{Baumgratz14}, optimized over state extensions and all degenerate cases, if any. Coherence effects are typically signs of non-classicality if an appropriate basis is chosen. It remains to be shown what kind of non-classicality the above quantity measures. In the following section, we will argue that the $\alpha$-coherence is closely related to non-classicality in the sense of negative Glauber-Sudarshan P distributions.

\section{Main Results}

Here, we present the main properties of the $\alpha$-coherence and 
its relation to traditional notions of coherence in the quantum optics literature. We first prove that, for a given state $\rho_{A}$, a vanishing value of the $\alpha$-coherence is equivalent to the existence of a Glauber-Sudarshan P distribution (referred to hereafter simply as the P distribution) for $\rho_{A}$ which is a probability density on the complex plane. A nonzero value of the $\alpha$-coherence is, therefore, an indicator of non-classicality.

\begin{theorem}

The $\alpha$-coherence $\mathcal{C}_\alpha(\rho_A) = 0$ iff $\rho_{A}$ is a classical state.

\end{theorem}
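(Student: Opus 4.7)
The plan is to establish each direction of the biconditional separately, exploiting the explicit structure of the Gram--Schmidt unitary for the forward implication, and combining asymptotic continuity of $\mathcal{C}$ with the closedness of the set of classical states for the reverse implication. For the forward direction, I would assume $\rho_A$ is classical, so $\rho_A = \int d^2\alpha\, P(\alpha)\ketbra{\alpha}{\alpha}$ with $P\ge 0$. Given $\epsilon>0$, I would first approximate $\rho_A$ in trace norm by a finite convex combination $\sigma = \sum_{i=1}^M p_i \ketbra{\alpha_i}{\alpha_i}$ lying in $\mathcal{B}(\rho_A,\epsilon)$, which is possible because $P(\alpha)\,d^2\alpha$ is a probability measure and finitely supported measures are dense in total variation. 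I would then exploit the freedom in the choice of extension: take $\sigma_{AE} = \sum_i p_i \ketbra{\alpha_i}{\alpha_i}_A\otimes \ketbra{\gamma_i}{\gamma_i}_E$, where the coherent states $\ket{\gamma_i}$ on an auxiliary mode $E$ are chosen pairwise increasingly well-separated. As $|\braket{\gamma_i}{\gamma_j}|\to 0$ for $i\ne j$, the greedy Gram--Schmidt procedure identifies $\{\ket{\alpha_i,\gamma_i}\}_{i=1}^M$ as the sequence of maximal-overlap coherent states, and for $N\ge M$ the image $\Phi^{(N)}_{GS}(\sigma_{AE})$ converges in trace norm to the diagonal state $\sum_i p_i \ketbra{\alpha_i,\gamma_i,\beta_i}{\alpha_i,\gamma_i,\beta_i}$, which is incoherent in the induced orthogonal basis. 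Hence $\mathcal{C}_\alpha(\sigma,N)\to 0$ under the infimum over extensions; since $\sigma\in\mathcal{B}(\rho_A,\epsilon)$ this forces $\mathcal{C}_\alpha(\rho_A,N,\epsilon) = 0$, and the nested limits in Definition~\ref{def::alphaCoherence} give $\mathcal{C}_\alpha(\rho_A) = 0$.

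For the reverse direction, I would assume $\mathcal{C}_\alpha(\rho_A) = 0$ and extract a classical state arbitrarily close to $\rho_A$ in trace norm. Unpacking the nested infima, for any $\eta,\epsilon>0$ there exist $\rho'_A\in\mathcal{B}(\rho_A,\epsilon)$, an extension $\rho'_{AE}$, and $U^{(N)}_{GS}\in\mathcal{S}^{(N)}$ with $\mathcal{C}[\Phi^{(N)}_{GS}(\rho'_{AE})]<\eta$. Asymptotic continuity of $\mathcal{C}$ then furnishes an incoherent state $\tau = \sum_i q_i \ketbra{\vec{\alpha_i},\beta_i}{\vec{\alpha_i},\beta_i}$ within trace-norm distance $\delta(\eta)\to 0$ of $\Phi^{(N)}_{GS}(\rho'_{AE})$. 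I would next argue that the rank-$N$ projection $\Pi^{(N)}_{GS}$ captures almost all the weight of $U^{(N)}_{GS}(\rho'_{AE}\otimes\ketbra{0}{0}_B)U^{(N)\dagger}_{GS}$ in the appropriate limit, so that by a gentle-measurement bound the full unitarily transformed state is trace-norm close to $\tau$ itself. Applying $U^{(N)\dagger}_{GS}$ then shows $\rho'_{AE}\otimes\ketbra{0}{0}_B$ to be close to $\sum_i q_i U^{(N)\dagger}_{GS}\ketbra{\vec{\alpha_i},\beta_i}{\vec{\alpha_i},\beta_i}U^{(N)}_{GS}$. Using that the $B$-marginal of $\rho'_{AE}\otimes\ketbra{0}{0}_B$ is pure $\ketbra{0}{0}_B$ forces each pure state $U^{(N)\dagger}_{GS}\ket{\vec{\alpha_i},\beta_i}$ to approximately factor as (some $AE$ coherent vector)$\otimes\ket{0}_B$, and tracing out $E$ then yields that $\rho'_A$ is close to a convex combination of coherent-state projectors on $A$, hence approximately classical. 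Since $\epsilon,\eta$ are arbitrary and the set of classical states is closed in trace norm, $\rho_A$ must itself be classical.

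The main obstacle will be the inversion step just described. While $U^{(N)}_{GS}\ket{\vec{\alpha_1},0}_B = \ket{\vec{\alpha_1},\beta_1}$ holds cleanly for the first chosen coherent state, the non-orthogonality $\braket{\vec{\alpha_j}}{\vec{\alpha_i}}\ne 0$ implies that $U^{(N)\dagger}_{GS}\ket{\vec{\alpha_i},\beta_i}\ne \ket{\vec{\alpha_i},0}$ in general for $i>1$. Three ingredients will need to be combined to control this: (i) the infimum over extensions can drive the optimal sequence $\{\ket{\vec{\alpha_i}}\}$ toward mutual orthogonality on the enlarged system, restoring the clean inversion up to vanishing error; (ii) the identity $U_{\alpha_j}\ket{\vec{\alpha_i},\beta_i} = \ket{\vec{\alpha_i},\beta_i}$ for $j\ne i$, which follows directly from $\braket{0}{\beta_i} = \braket{\beta_j}{\beta_i} = 0$, constrains how different coherent-state components can interfere during inversion; and (iii) for states close to the classical set, the weight outside $\Pi^{(N)}_{GS}$ should vanish as $N\to\infty$, justifying the gentle-measurement step. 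Carefully assembling these three pieces will supply the controlled trace-norm bound needed to upgrade approximate classicality of $\rho'_A$ to exact classicality of $\rho_A$ in the joint limit.
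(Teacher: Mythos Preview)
Your overall architecture matches the paper's proof closely: approximate a classical $\rho_A$ by finite mixtures of coherent states, extend with orthogonal tags to kill the $N$-coherence for the forward direction; and for the converse, use asymptotic continuity of $\mathcal{C}$ together with closedness of the classical set. Two points deserve comment.

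First, in the forward direction your justification ``finitely supported measures are dense in total variation'' is false as stated: a diffuse probability measure is at total-variation distance $1$ from every finitely supported one. The paper instead invokes weak-$*$ density of Dirac combinations in the regular Borel measures and then uses the fact (Holevo, Lemma~11.1) that weak convergence and trace-norm convergence coincide on the state space. This is what actually delivers $\rho_A^{(m)}\to\rho_A$ in trace norm; you should replace your sentence accordingly.

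Second, for the converse you have correctly isolated the delicate inversion step---passing from small $\mathcal{C}[\Phi^{(N)}_{GS}(\rho'_{AE})]$ back to a classical state close to $\rho'_{AE}$ itself---and propose a gentle-measurement plus factorisation argument to control it. The paper, by contrast, simply \emph{asserts} that small $N$-coherence implies the existence of an incoherent $\sigma^{(n)}_{AE}=\sum_j c_j\ketbra{\alpha_j}{\alpha_j}\otimes\ketbra{\vec{\alpha}'_j}{\vec{\alpha}'_j}$ within a trace-norm $\epsilon''$ of $\rho^{(n)}_{AE}$, and then traces out $E$. So your treatment is not merely different but more explicit than the paper's at this point. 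That said, your step~4 (``purity of the $B$-marginal forces each $U_{GS}^{(N)\dagger}\ket{\vec{\alpha_i},\beta_i}$ to factor as a coherent vector on $AE$ times $\ket{0}_B$'') does not follow as stated: purity of the $B$-marginal constrains the full mixture, not its individual components, and in any case gives no information about the $AE$ part being a \emph{coherent} state. The controlled inversion really must come from your ingredients (i) and (ii)---the infimum over extensions driving the selected $\{\ket{\vec{\alpha_i}}\}$ toward orthogonality, together with the structural identity $U_{\alpha_j}\ket{\vec{\alpha_i},\beta_i}=\ket{\vec{\alpha_i},\beta_i}$ for $j\neq i$---rather than from the $B$-marginal argument. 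With that correction your plan is sound and essentially the paper's argument made more careful.
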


\begin{proof}
Let $\rho_{A}$ have P distribution $P_{\rho_{A}}(\alpha)$ which is the density of a regular Borel probability measure on the complex plane. By the density (in the weak-* topology) of the Dirac point measures on the space of regular Borel measures on $\mathbb{C}$, it follows that given $\epsilon >0$ and a continuous function $f(\alpha)$ on $\mathbb{C}$ that vanishes at infinity, there exists a finite linear combination of point measures $\sum_{j=0}^{m}c_{j}\delta ( \alpha - \alpha_{j})$, with $c_{j}>0$ and $\sum_{j=0}^{m}c_{j}=1$, such that $\vert \int \, \left( P_{\rho_{A}}(\alpha) - \sum_{j=0}^{m}c_{j}\delta(\alpha - \alpha_{j}) \right) f(\alpha){d^{2}\alpha \over \pi}  \vert \le \epsilon$. In terms of quantum states, this implies that the weak limit $\lim_{m \rightarrow \infty}\sum_{j=0}^{m}c_{j}\ket{\alpha_{j}}_{A}\bra{\alpha_{j}} = \int \, {d^{2}\alpha \over \pi} P_{\rho_{A}}(\alpha)\ket{\alpha}_{A} \bra{\alpha} = \rho_{A}$. Because weak convergence and trace norm convergence coincide on the
 set of quantum states (\cite{holevoqubook}, Lemma 11.1), the sequence of classical states $\rho_{A}^{(m)}:=\sum_{j=0}^{m}c_{j}\ket{\alpha_{j}}_{A}\bra{\alpha_{j}}$ converges to $\rho_{A}$ in trace norm. For each $m$, $\rho_{A}^{(m)}$ permits an extension $\rho_{AE}^{(m)} = \sum_{j=0}^{m} c_{j} \ket{\alpha_j}_A\bra{\alpha_j}\otimes \ket{\alpha'_j}_E\bra{\alpha'_j}$, where $\braket{\alpha'_i}{\alpha'_j} = \delta_{ij}$, so the extension is diagonal with respect to an orthogonal basis $\{ \ket{\alpha_j}_A\ket{\alpha'_j}_E \}_{j=0,\ldots ,m}$. Therefore, the $m$-coherence of $\rho_{AE}^{(m)}$ is zero; in fact, $\mathcal{C}_\alpha(\rho_{AE}^{(m)},N) = 0$ for every $N$. By the above construction, the sequence $\rho^{(m)}_{AE}$ satisfies $\lim_{m\rightarrow \infty} \mathrm{Tr}_E(\rho^{(m)}_{AE}) = \lim_{m\rightarrow \infty} \rho_{A}^{(m)} = \rho_A$. By of the contractivity of the trace distance under partial trace and the assumed continuity of the coherence measure $\mathcal{C}$, we have $\mathcal{C}_\alpha(\rho_A) = 0$.

To prove the converse, first suppose that $\mathcal{C}_\alpha(\rho_A) = 0$. By the definition of $\alpha$-coherence, there exists a sequence of extensions $\rho^{(n)}_{AE}$ such that as $n \rightarrow \infty$ and $N \rightarrow \infty$, $\mathcal{C}(\Phi^{(N)}_{GS}(\rho^{(n)}_{AE})) \rightarrow 0$  and $\mathrm{Tr}_{E}(\rho^{(n)}_{AE}) \rightarrow \rho_A$. Therefore, for any $\epsilon' >0$, for sufficiently large $n$, there exists some value $N_{max}$ such that for every $N > N_{max}$, we have  $\mathcal{C}(\Phi^{(N)}_{GS}(\rho^{(n)}_{AE})) < \epsilon'$. Since the $N$-coherence is arbitrarily small for sufficiently large $n$ and $N$, this further implies that for every $\rho^{(n)}_{AE}$, where $n$ is sufficiently large, there exists some state $\sigma^{(n)}_{AE}$ in a small trace norm neighborhood of $\rho^{(n)}_{AE}$ that is incoherent (i.e., $\sigma^{(n)}_{AE} = \sum_{j} c_{j} \ket{\alpha_j}_{A}\bra{\alpha_j} \otimes\ket{\vec{\alpha}'_j}_{E}\bra{\vec{\alpha}'_j}$, with $\frac{1}{2}\Vert \rho^{(n)}_{AE} - \sigma^{(n)}_{AE} \Vert_{tr} < \epsilon''$ for an $\epsilon'' >0$). If we were to choose $n$ such that for every $n > n_{max}$ for some $n_{max}$, $\mathrm{Tr}_E(\rho^n_{AE}) \in \mathcal{B}(\rho_A, \epsilon/2)$, and also choose $\epsilon'' = \epsilon/2$, we will have $\mathrm{Tr}_E(\sigma^n_{AE}) \in \mathcal{B}(\rho_A, \epsilon)$. As $\mathrm{Tr}_{E}\sigma^{(n)}_{AE}$ has a P distribution which is a regular Borel measure on $\mathbb{C}$, i.e., it is classical, and the set of classical states is closed and contains no isolated points \cite{bach}, $\rho_A$ is also classical, which completes the proof.
\end{proof}

In quantum optics, the nonclassical character of a quantum state is usually manifest in the measurement statistics of moments of the quadrature or number operators. Specifically, a classical P distribution constrains these correlation functions to satisfy linear or nonlinear inequalities, depending on the nonclassical features of interest \cite{agarwalcorr,Ryl2017}. Theorem 1 extends the general operational content of the fact that a quantum state associated with a P distribution that is a \textit{bona fide} probability distribution fails to exhibit nonclassical characteristics.  It implies that if a quantum system is described by a state that is indistinguishable from a classical state, then it is impossible to extract any non-classical resource from the system by using the free operations of the coherence resource theory in which the resource is measured by $\mathcal{C}$.

We now consider a possible resource theory where the ``free" operations are linear optical operations, which we define as operations achievable using some combination of linear optical elements such as beam splitters, phase shifters, half and quarter wave plates. Concatenations of these elementary operations forms the most readily available set of operations for manipulating quantum light in the laboratory today. These elements can address both the spatial and polarization degrees of freedom of light. In addition, we also allow for displacement operations as well as additional ``free'' resources in the form of classical ancillas, where classicality means classical P-distributions.

\begin{definition} [Linear optical maps] \label{def::Maps}
A quantum map $\Phi_{L}$ is called a linear optical map/operation if 

$$\Phi_L(\rho_A) = \mathrm{Tr}_E(U_L\rho_A \otimes\sigma_E U_L^\dag)$$

where $U_L$ is some unitary implementable by linear optical elements such as beam splitters, phase shifters, half and quarter wave plates as well as single mode displacement operations. $\sigma_E$ is some classical, possibly multimode ancillary system.

A set of Kraus operators $\{K_i\}$ satisfying $\sum_i K_i^\dag K_i = \openone$ with corresponding POVM elements $K^\dag_i K_i$ representing some classical measurement outcome $i$ is called a linear optical measurement if classical measurement outcomes can be obtained via a linear optical map, i.e. there exists $U_L$ and classical ancilla $\sigma_E$ and some set of orthogonal vectors $\{\ket{\alpha_i'}_{E'} \}$ such that

$$
\mathrm{Tr}_E( U_L\rho_A \otimes \sigma_{EE'} U_L^\dag )= \sum_i p_i \rho^i_{A} \otimes \ket{\alpha'_i}_{E'}\bra{\alpha'_i}
$$

for some density matrices $\rho^i_{A}$, where $p_i \rho^i_{A} \coloneqq K_i\rho_AK^\dag_i $ and $p_i \coloneqq \mathrm{Tr}(K_i\rho_AK^\dag_i)$

\end{definition}

Before we proceed further, we first make the following observation which will prove useful in the subsequent proofs:

\begin{proposition} \label{prop::classicalAncilla}

For any quantum state $\rho_A$ and any classical quantum state $\sigma_B$

$$\mathcal{C}_{\alpha}(\rho_A \otimes \sigma_B) = \mathcal{C}_\alpha(\rho_A)$$

\end{proposition}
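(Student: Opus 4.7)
The plan is to prove both inequalities separately, reducing first to the case of a pure coherent state $\sigma_B = \ket{\beta}\bra{\beta}$ and then lifting to general classical $\sigma_B$ via the nonnegative Glauber-Sudarshan decomposition $\sigma_B = \int P_\sigma(\beta)\ket{\beta}\bra{\beta}\,d^2\beta$ guaranteed by classicality (together with the density of finite coherent-state mixtures that the forward half of Theorem~1 supplies).

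For the pure coherent case I would exploit two structural features. First, any extension $\rho_{ABE}$ of $\rho_A \otimes \ket{\beta}\bra{\beta}_B$ must factor as $\rho_{AE} \otimes \ket{\beta}\bra{\beta}_B$, because the $B$-marginal is pure and a pure marginal forces a product structure (any purification splits off the $\ket{\beta}_B$ factor). Second, $\abs{\braket{\gamma}{\beta}}^2 = e^{-\abs{\gamma-\beta}^2}$ is uniquely maximized at $\gamma = \beta$, so the Gram-Schmidt procedure of Definition~\ref{def::GSUnitary} is forced at every step to select $\ket{\gamma^{(i)}}_B = \ket{\beta}$. Each CNOT-type unitary $U_{\alpha^{(i)},\beta}$ then acts as the single-mode $U_{\alpha^{(i)}}$ on the $AC$ subsystem while leaving $\ket{\beta}_B$ invariant, giving $\Phi^{(N)}_{GS,AB}(\rho_{AE}\otimes\ket{\beta}\bra{\beta}_B) = \Phi^{(N)}_{GS,A}(\rho_{AE}) \otimes \ket{\beta}\bra{\beta}_B$. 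Because tensoring with a pure state leaves a coherence measure satisfying the Baumgratz axioms invariant, the two coherences coincide; taking the infimum over extensions of $\rho_A$ and the limits in $N$ and $\epsilon$ then yields $\mathcal{C}_\alpha(\rho_A\otimes\ket{\beta}\bra{\beta}_B) = \mathcal{C}_\alpha(\rho_A)$ in both directions simultaneously.

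For general classical $\sigma_B$, the $\leq$ direction follows from convexity of $\mathcal{C}_\alpha$, which is inherited from convexity of $\mathcal{C}$ and preserved under the defining infima and limits: $\mathcal{C}_\alpha(\rho_A\otimes\sigma_B) \leq \int P_\sigma(\beta)\,\mathcal{C}_\alpha(\rho_A\otimes\ket{\beta}\bra{\beta}_B)\,d^2\beta = \mathcal{C}_\alpha(\rho_A)$. For the reverse inequality I would first approximate $\sigma_B$ by finite mixtures $\sigma_B^{(m)} = \sum_j q_j \ket{\beta_j}\bra{\beta_j}$ in trace norm and then use continuity of the $\epsilon$-smoothed $\mathcal{C}_\alpha$ (built into Definition~3 through the $\epsilon$-ball infimum) to push the lower bound through the limit $m\to\infty$.

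The main obstacle is precisely this $\geq$ inequality for nontrivial mixed $\sigma_B$. Unlike the pure coherent case, extensions of $\rho_A\otimes\sigma_B^{(m)}$ need not factor — the environment may correlate with $B$ — and the Gram-Schmidt iteration on $AB$ selects a single coherent-state sequence that is a global compromise across the support $\{\beta_j\}$, rather than decoupling cleanly from the $B$ mode. Showing that this additional freedom cannot drive $\mathcal{C}_\alpha$ below $\mathcal{C}_\alpha(\rho_A)$ will likely require an explicit construction using an orthogonalized extension $\sigma_{BE_B}^{(m)} = \sum_j q_j \ket{\beta_j}\bra{\beta_j}_B \otimes \ket{\beta'_j}\bra{\beta'_j}_{E_B}$ with $\{\ket{\beta'_j}\}$ orthonormal, which renders the $B$ mode effectively classical-orthogonal and decouples it from the GS dynamics, or alternatively an appeal to monotonicity of $\mathcal{C}_\alpha$ under the linear-optical partial trace over $B$ — the natural physical principle that forbids extra classical modes from lowering coherence.
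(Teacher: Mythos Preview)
Your route is considerably more circuitous than the paper's, and two of your proposed shortcuts are circular.

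The paper does not pass through the pure coherent-state case at all. It goes directly to the orthogonalized extension you mention only at the very end: given classical $\sigma_B$, Theorem~1 supplies a sequence $\sigma^m_{BE'}=\sum_i c(i)\ket{\alpha_i}_B\bra{\alpha_i}\otimes\ket{\vec\alpha'_i}_{E'}\bra{\vec\alpha'_i}$ with the $\ket{\alpha_i}_B\ket{\vec\alpha'_i}_{E'}$ orthonormal and $\mathrm{Tr}_{E'}\sigma^m_{BE'}\to\sigma_B$. Tensoring this with the near-optimal sequence $\rho^n_{AE}$ for $\rho_A$, the orthogonality in $BE'$ forces the Gram--Schmidt map to act block-diagonally, giving $\mathcal C\bigl(\Phi^{(N)}_{GS}(\rho^n_{AE}\otimes\sigma^n_{BE'})\bigr)=\sum_i c(i)\,\mathcal C\bigl(\Phi^{(N)}_{GS}(\rho^n_{AE})\bigr)=\mathcal C\bigl(\Phi^{(N)}_{GS}(\rho^n_{AE})\bigr)$. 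That single computation is what the paper uses; the separate pure-$\ket\beta$ analysis and the convexity step are not needed.

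More seriously, your argument for the $\leq$ direction invokes convexity of $\mathcal C_\alpha$ as if it were automatic (``inherited from convexity of $\mathcal C$ and preserved under the defining infima''). An infimum of convex functions is not convex in general, so this requires proof; in the paper convexity of $\mathcal C_\alpha$ is established only later, in Theorem~2, and by essentially the same orthogonal-extension trick. Likewise, your alternative for the $\geq$ direction---monotonicity of $\mathcal C_\alpha$ under the linear-optical partial trace---is exactly what Theorem~2 proves \emph{using} Proposition~\ref{prop::classicalAncilla}, so appealing to it here is circular. The construction you flag at the end is the right one; it should be the starting point, not the fallback.
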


\begin{proof}

Suppose $\mathcal{C}_{\alpha}(\rho_A) = C$. This implies that that there exists some sequence of extensions such that $\rho^n_{AE}$ satisfying $\lim_{n\rightarrow \infty} \mathrm{Tr}(\rho^n_{AE}) = \rho_A$, such that for any $\epsilon  > 0 $, there exists sufficiently large $n$ and $N$ such that $|\mathcal{C} ( \Phi^{(N)}_{GS}(\rho_{AE}) )  - C | \leq \epsilon$. If $\sigma_B$ is a classical state, then there exists some sequence of states $\sigma^m_{BE'}$ satisfying $\lim_{m \rightarrow \infty} \mathrm{Tr}(\sigma^m_{BE'}) = \sigma_B$ such that for sufficiently large $m$ and every $M$, $\mathcal{C} ( \Phi^{(M)}_{GS}(\sigma^m_{BE'}) )  = 0$, so $\sigma^m_{BE'} = \sum_i c(i) \ket{\alpha_i}_{B}\bra{\alpha_i} \otimes\ket{\vec{\alpha}'_i}_{E'}\bra{\vec{\alpha}'_i}$ and $\braket{\alpha_i}{\alpha_j}_B\braket{\vec{\alpha}'_i}{\vec{\alpha}'_j}_{E'} = \delta_{ij}$. As a result, we have $ \mathcal{C}(\Phi^{(N)}_{GS}(\rho^n_{AE} \otimes \sigma^n_{BE'}))  =  \mathcal{C} ( \Phi^{(N)}_{GS} (\sum_i c(i) \rho^n_{AE}  \otimes \ket{\alpha_i}_{B}\bra{\alpha_i}\otimes  \ket{\vec{\alpha}'_i}_{E'}\bra{\vec{\alpha}'_i}))  =   \sum_{i} c(i) \mathcal{C} ( \Phi^{(N)}_{GS}(\rho^n_{AE}) ) =  \mathcal{C} ( \Phi^{(N)}_{GS}(\rho^n_{AE}) ) $ , where the second inequality comes about because $\rho_{AE} \otimes \ket{\alpha_i}_{B}\bra{\alpha_i}\otimes  \ket{\vec{\alpha}_i}_{E'}\bra{\vec{\alpha}_i}$ occupies orthogonal subspaces for different values of $i$. Therefore, there exists a sequence such that $\lim_{n \rightarrow \infty} \mathrm{Tr}_{EE'}(\rho^n_{AE} \otimes \sigma^n_{BE'}) = \rho_A \otimes \sigma_B$ and $\mathcal{C}(\Phi^{(N)}_{GS}(\rho^n_{AE} \otimes \sigma^n_{BE'})) = \mathcal{C} ( \Phi^{(N)}_{GS}(\rho^n_{AE}) )$, which implies $\mathcal{C}_{\alpha}(\rho_A \otimes \sigma_B) = \mathcal{C}_{\alpha}(\rho_A)$ if $\sigma_B$ is classical.

\end{proof}

We can then consider a resource theory based on the non-classical P distributions and linear optical operations. 
\begin{definition} \label{def::monotonicity}

We call $\mathcal{Q}$ a non-classicality measure if the following conditions are satisfied:

\begin{enumerate}
\item $\mathcal{Q}(\rho) = 0$ iff $\rho$ is classical.

\item  

	\begin{enumerate}
	
	\item  (Weak monotonicity) $\mathcal{Q}$ is monotonically decreasing under linear optical operations $\Phi_{L}$, 	i.e. $\mathcal{Q}(\rho_A)\geq \mathcal{Q}(\Phi_L (\rho_A))$.
	
	\item (Strong monotonicity) Let $\{ K_i \}$ be a set of Kraus operators corresponding to a linear optical measurement with outcomes $i$. Then $\mathcal{Q}$ is non-increasing when averaged over measurement outcomes $i$, i.e. $\mathcal{Q}(\rho) \geq \sum_i p_i\mathcal{Q}(\rho_i)$ where $p_i \coloneqq \mathrm{Tr}(K^\dag_iK_i \rho )$ and $\rho_i \coloneqq \frac{1}{p_i} K_i \rho K^\dag_i$.
	
	\end{enumerate}

\item $\mathcal{Q}$ is convex, i.e. $\mathcal{Q}(\sum_i p_i \rho_i) \leq \sum_i p_i \mathcal{Q}( \rho_i)$
\end{enumerate}

\end{definition}

We have the following result:

\begin{theorem}
The $\alpha$-coherence is a non-classicality measure.
\end{theorem}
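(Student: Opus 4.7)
The plan is to verify the four defining conditions of Definition~\ref{def::monotonicity} in sequence. Condition~1 is immediate from Theorem~1 together with the nonnegativity of the underlying coherence measure $\mathcal{C}$, which survives the infima defining $\mathcal{C}_\alpha$. For weak monotonicity under a linear optical map $\Phi_L(\rho_A) = \mathrm{Tr}_E(U_L(\rho_A\otimes\sigma_E)U_L^\dag)$, I would chain three observations: (i)~$\mathcal{C}_\alpha(\rho_A) = \mathcal{C}_\alpha(\rho_A\otimes\sigma_E)$ by Proposition~\ref{prop::classicalAncilla}, since $\sigma_E$ is classical; (ii)~since $U_L$ maps multimode coherent states to multimode coherent states, conjugation by $U_L$ merely relabels the coherent states entering the Gram-Schmidt procedure, so a bijection on the infima yields $\mathcal{C}_\alpha(\rho_A\otimes\sigma_E) = \mathcal{C}_\alpha(U_L(\rho_A\otimes\sigma_E)U_L^\dag)$; (iii)~tracing out $E$ does not increase $\alpha$-coherence, because any extension of $\omega_{AE} := U_L(\rho_A\otimes\sigma_E)U_L^\dag$ is simultaneously an extension of $\mathrm{Tr}_E\omega_{AE}$, and after Gram-Schmidt orthogonalization on $AE$ the partial trace returns a state supported on the still-orthonormal set $\{\ket{\vec\alpha_i^A}_A\ket{\beta_i}_B\}$, with off-diagonal entries shrunk by the factors $|\braket{\alpha_j^E}{\alpha_i^E}|\le 1$.

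For convexity, I would begin with $\rho = \sum_i p_i \rho_i$ and near-optimal extensions $\rho_i^{AE_i}$ together with Gram-Schmidt unitaries for each $\rho_i$. Appending a classical flag register of pairwise (nearly) orthogonal coherent states $\{\ket{\gamma_i}\}$ produces $\sum_i p_i\,\rho_i^{AE_i}\otimes\ketbra{\gamma_i}{\gamma_i}$, which is a valid extension of $\rho$ modulo the flag, and the flag is free by Proposition~\ref{prop::classicalAncilla}. A block-diagonal Gram-Schmidt controlled by the flag then orthogonalizes each summand independently, and the resulting $\mathcal{C}$-coherence decomposes as a convex combination on disjoint orthogonal subspaces, to which axiom~(C3) for $\mathcal{C}$ applies, yielding $\mathcal{C}_\alpha(\rho) \le \sum_i p_i\,\mathcal{C}_\alpha(\rho_i)$ after passing to the $N\to\infty$, $\epsilon\to 0$ limit.

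For strong monotonicity, a linear optical measurement $\{K_i\}$ records each outcome $i$ in an orthogonal coherent state $\ket{\alpha_i'}$ after a linear optical unitary (Definition~\ref{def::Maps}). I would reuse the unitary-invariance and classical-ancilla steps from weak monotonicity to pass into the orthogonalized picture, in which the outcome projectors onto $\ket{\alpha_i'}$ become incoherent Kraus operators in the Gram-Schmidt basis $\{\ket{\vec\alpha_j}\ket{\beta_j}\}$. Axiom~(C2b) of $\mathcal{C}$ then supplies the averaged inequality $\mathcal{C}(\sigma)\ge\sum_i p_i\,\mathcal{C}(\sigma_i)$ on the orthogonalized states, which I would lift through the infima defining $\mathcal{C}_\alpha$ to obtain $\mathcal{C}_\alpha(\rho)\ge\sum_i p_i\,\mathcal{C}_\alpha(\rho_i)$.

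The hardest parts will be clause~(iii) of weak monotonicity and the reduction for strong monotonicity. In~(iii), the delicacy is that a Gram-Schmidt procedure using multimode coherent states on $AE$ must be matched by one using coherent states on $A$ alone with no larger $\mathcal{C}$-value; the greedy maximization in the recursive definition of $U^{(N)}_{GS}$ introduces a nontrivial compatibility concern that has to be handled either by exploiting the degeneracy set $\mathcal{S}^{(N)}$ or by an explicit construction of a matching A-side procedure. For strong monotonicity, the corresponding subtlety is producing a single Gram-Schmidt unitary that is simultaneously compatible with all outcome branches of the linear optical measurement, so that the (C2b) inequality survives the infima over extensions and Gram-Schmidt choices involved in $\mathcal{C}_\alpha$.
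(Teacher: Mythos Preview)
Your overall architecture matches the paper's proof: Condition~1 via Theorem~1, Proposition~\ref{prop::classicalAncilla} for classical ancillas, unitary invariance from the fact that $U_L$ maps products of coherent states to products of coherent states, and orthogonal flags for convexity. The two places you flag as ``hardest'' are precisely where the paper takes a simpler route that sidesteps the compatibility issues you anticipate.

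For weak monotonicity, step~(iii): you already state the decisive observation---``any extension of $\omega_{AE}$ is simultaneously an extension of $\mathrm{Tr}_E\omega_{AE}$''---but then detour into an argument about off-diagonal shrinkage and matching the $AE$-side Gram--Schmidt to an $A$-side one. The paper never attempts that comparison. It keeps the full untraced state $U_L\,\rho^n_{AE}\otimes\sigma^n_{E'E''}\,U_L^\dag$ as a (possibly suboptimal) sequence of extensions whose $A$-marginal converges to $\Phi_L(\rho_A)$; since $\mathcal{C}_\alpha$ is defined as an infimum over extensions, suboptimality of this particular sequence immediately yields $\mathcal{C}_\alpha(\Phi_L(\rho_A))\le \mathcal{C}_\alpha(\rho_A)$. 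No relation between Gram--Schmidt procedures on $AE$ versus $A$ is ever needed, so the greedy-maximization concern you raise does not arise.

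For strong monotonicity you propose invoking axiom~(C2b) of $\mathcal{C}$ after showing that the outcome projectors become incoherent Kraus operators in the Gram--Schmidt basis, and you rightly worry about producing a single Gram--Schmidt unitary compatible with all outcome branches. The paper avoids (C2b) entirely. It retains the classical pointer register $E''$ so that the post-unitary state is (an extension of) the block-diagonal state $\sum_i p_i\,\rho^i_A\otimes\ketbra{\alpha'_i}{\alpha'_i}$, then uses that on orthogonal subspaces the $\alpha$-coherence splits as $\sum_i p_i\,\mathcal{C}_\alpha(\rho^i_A\otimes\ketbra{\alpha'_i}{\alpha'_i})$, and finally applies Proposition~\ref{prop::classicalAncilla} to strip each flag. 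Thus strong monotonicity reduces to the same suboptimal-extension trick as weak monotonicity plus a block-diagonal decomposition, with no need for a branch-compatible Gram--Schmidt construction.
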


\begin{proof}
The first condition is already proven.

We now prove weak monotonicity. Consider the state $\rho_A$. Suppose $\mathcal{C}_{\alpha}(\rho_A) = C$. This implies that that there exists some sequence of extensions such that $\rho^n_{AE}$ satisfying $\lim_{n\rightarrow \infty} \mathrm{Tr}(\rho^n_{AE}) = \rho_A$, such that for any $\epsilon  > 0 $, there exists sufficiently large $n$ and $N$ such that $|\mathcal{C} ( \Phi^{(N)}_{GS}(\rho_{AE}) )  - C | \leq \epsilon$. Consider the linear map $\Phi_L(\rho_A) = \mathrm{Tr}_{E'}(U_L\rho_A \otimes \sigma_{E'} U_L^\dag)$. By definition, $\sigma_{E'}$ has a classical P distribution, so there exists some sequence of states satisfying $\lim_{m \rightarrow \infty} \mathrm{Tr}(\sigma^m_{E'E''}) = \sigma_{E'}$ such that for sufficiently large $m$ and every $M$, $\mathcal{C} ( \Phi^{(M)}_{GS}(\sigma^m_{E'E''}) )  = 0$, so $\sigma^m_{E'E''} = \sum_i c(i) \ket{\alpha_i}_{E'}\bra{\alpha_i} \otimes\ket{\vec{\alpha}'_i}_{E''}\bra{\vec{\alpha}'_i}$ and $\braket{\alpha_i}{\alpha_j}_B\braket{\vec{\alpha}'_i}{\vec{\alpha}'_j}_{E'} = \delta_{ij}$. From Proposition~\ref{prop::classicalAncilla}, we know that the sequence $\rho^n_{AE}\otimes \sigma^n_{E'E''}$ satisfies $\mathcal{C}(\Phi^{(N)}_{GS}(\rho^n_{AE} \otimes \sigma^n_{E'E''})) = \mathcal{C} ( \Phi^{(N)}_{GS}(\rho^n_{AE}) )$ and $\lim_{n\rightarrow \infty}\mathrm{Tr}_{EE'E''}(\rho^n_{AE}\otimes \sigma^n_{E'E''}) = \rho_A $. We also note that the unitary operation does not change the coherence, so that $\mathcal{C}(\Phi^{(N)}_{GS}(U_L\rho^n_{AE}\otimes \sigma^n_{E'E''} U_L^\dag)) = \mathcal{C}(\Phi'^{(N)}_{GS}(\rho^n_{AE}\otimes \sigma^n_{E'E''} ))$. This is because unitary linear operations always map products of coherent states to another product of coherent states $U_L \ket{\vec{\alpha}} = \ket{\vec{\alpha}'}$. As a consequence the linear operations simply transforms the CNOT type operations to another CNOT type unitary: $U^\dag_L U_{\vec{\alpha}}U_L = U_{\vec{\alpha'}}$. Since the sequence $U_L\rho^n_{AE}\otimes \sigma^n_{E'E''} U_L^\dag$ is in general, a suboptimal sequence of states satisfying $\lim_{n\rightarrow \infty}\mathrm{Tr}_{EE'E''}(U_L \rho^n_{AE}\otimes \sigma^n_{E'E''} U_L^\dag) = \Phi_L(\rho_A)$, we must have $\mathcal{C}_{\alpha}(\rho_A)\geq \mathcal{C}_{\alpha}(\Phi_L (\rho_A))$, which proves weak monotonicity.


The proof of strong monotonicity proceeds similarly. Following from the argument for weak monotonicity, we suppose the linear optical measurement is implemented via the map $\Phi_L(\rho_A \otimes \sigma_{E'E''}) = \mathrm{Tr}_{E'}(U_L\rho_A \otimes \sigma_{E'E''} U_L^\dag) = \sum_i \tau^i_{A} \otimes \ket{\alpha'_i}_{E''}\bra{\alpha'_i}$ and $\tau^i_{A} = K_i\rho_AK^\dag_i$, where $\{ \ket{\alpha'_i}_{E''} \}$ is an orthogonal set. As before,  consider the sequence of extensions such that $\rho^n_{AE}$ satisfying $\lim_{n\rightarrow \infty} \mathrm{Tr}(\rho^n_{AE}) = \rho_A$, such that for any $\epsilon  > 0 $, there exists sufficiently large $n$ and $N$ such that $|\mathcal{C} ( \Phi^{(N)}_{GS}(\rho^n_{AE}) )  - C | \leq \epsilon$. Since $\sigma_{E'E''}$ is classical, there exists some sequence of states satisfying $\lim_{m \rightarrow \infty} \mathrm{Tr}_{E'''}(\sigma^m_{E'E''E'''}) = \sigma_{E'E''}$ such that for sufficiently large $m$ and every $M$, $\mathcal{C} ( \Phi^{(M)}_{GS}(\sigma^m_{E'E''E'''}) )  = 0$, so $\sigma^m_{E'E''E'''} = \sum_i c(i) \ket{\vec{\alpha}'_i}_{E'E''E'''}\bra{\vec{\alpha}'_i} $ and $\braket{\vec{\alpha}'_i}{\vec{\alpha}'_j}_{E'E''E'''} = \delta_{ij}$. 

The above unitary does not change the coherence, so $ \mathcal{C}(\Phi^{(N)'}_{GS}(\rho^n_{AE}\otimes \sigma^n_{E'E''E'''} ) )= \mathcal{C}(\Phi^{(N)}_{GS}(U_L \rho^n_{AE}\otimes \sigma^n_{E'E''E'''} U^\dag_L ) )$.  Consider now the sequence of states $\tau^n_{AE'E''E'''} \coloneqq U_L \rho^n_{AE}\otimes \sigma^n_{E'E''E'''} U^\dag_L  $ and $\tau^n_{AE''E'''} \coloneqq \mathrm{Tr}_{E'} (U_L \rho^n_{AE}\otimes \sigma^n_{E'E''E'''} U^\dag_L )$. From the definition of a linear optical measurement, the subsystem $E''$ stores classical orthogonal measurement outcomes, so $\tau^n_{AE''E'''} = \sum_i \tau^{n,i}_{AE'''} \otimes \ket{\alpha'_i}_{E''}\bra{\alpha'_i} $. Observe that $\lim_{n\rightarrow \infty} \mathrm{Tr}_{E'''}(\tau^{n,i}_{AE'''}) = K_i\rho_A K^\dag_i = p_i \rho^i_A$, so $\tau^n_{AE''E'''}$ and hence $\tau^n_{AE'E''E'''}$ are in fact a sequences of extensions approaching the state $\sum_i p_i \rho^i_A \otimes \ket{\alpha'_i}_{E''}\bra{\alpha'_i}$. Since this sequence of states is not necessarily optimal, we have that $ \mathcal{C}_{\alpha}(\rho_A)  \geq   \mathcal{C}_{\alpha}( \sum_i p_i \rho_A^i \otimes \ket{\alpha'_i}_{E''}\bra{\alpha'_i}) =  \sum_i p_i \mathcal{C}_{\alpha}(  \rho_A^i \otimes \ket{\alpha'_i}_{E''}\bra{\alpha'_i}) = \sum_i p_i \mathcal{C}_{\alpha}(  \rho_A^i )$. In the last equality, we used the fact that $\ket{\alpha'_i}$ specifies orthogonal subspaces for different $i$. This is sufficient to prove strong monotonicity.


The only thing that remains to be proven is convexity. Let $\rho^{n,i}_{AE}$ be the sequence of extensions of the set of states $\rho^i_A$ satisfying $\lim_{n\rightarrow \infty} \mathrm{Tr}(\rho^{n,i}_{AE}) = \rho^i_A$, such that for any $\epsilon  > 0 $, there exists sufficiently large $n$ and $N$ such that $|\mathcal{C} ( \Phi^{(N)}_{GS}(\rho^{n,i}_{AE}) )  - \mathcal{C}_\alpha (\rho^i_A) | < \epsilon$ for every $i$. It is clear that the state $\mathcal{C}(\Phi^{(N)}_{GS}(\sum_i p_i \rho^{n,i}_{AE} \otimes \ket{\alpha'_{n,i}}_{E'}\bra{\alpha'_{n,i}})) = \sum_i p_i \mathcal{C}(\Phi^{(N)}_{GS}(\rho_A^{n,i}))$ when $\{ \ket{\alpha'_i}_{E'} \}$ is an orthogonal set, which implies $\sum_i p_i \mathcal{C}_\alpha(\rho_A^i)\geq \mathcal{C}_{\alpha}(\sum_i p_i \rho^i_{A})$ since the sequence $\rho^{n,i}_{AE} \otimes \ket{\alpha'_{n,i}}_{E'}\bra{\alpha'_{n,i}}$ may be suboptimal for $\mathcal{C}_{\alpha}(\sum_i p_i \rho^i_{A})$. This proves convexity.

%
\end{proof}

\section{Examples}

Here, we present some numerical plots of the $\alpha$-coherence for some important classes of pure states. For pure states in particular, the optimization is much simpler as the only possible extensions are trivial, thus allowing us to sidestep part of the optimization involved in Definition~\ref{def::alphaCoherence}. For the examples considered, we will employ the relative entropy of coherence~\cite{Baumgratz14} as our coherence measure.

In Fig~\ref{fig::compare1} we see a comparison of the $\alpha$-coherence for the even and odd cat states $\ket{\alpha} \pm \ket{-\alpha}$, Fock states $\ket{n}$, and squeezed states $S(\xi) \ket{0}$ with a real squeezing parameter $\xi$. We see that for both Fock states and squeezed states, the $\alpha$-coherence monotonically increases, indicating strong nonclassicality as is expected. In the case of odd cat states, we see strong nonclassicality in the region where $\alpha \approx 0$. This is because in the limit $\alpha \rightarrow 0^+$, we know that the odd cat approaches the single photon state, an archetypical example of nonclassical light. In contrast, for the even cat states, as $\alpha \rightarrow 0^+$, the state approaches the vacuum, which is classical, and this is reflected in a vanishing $\alpha$-coherence. It is interesting to note that non-classicality peaks most strongly in the region $\alpha = 1$. We interpret this as a signature of the infinite dimensional nature of the underlying Hilbert space, as the state tends towards a 2 dimensional superposition as $\alpha \rightarrow \infty$. We also note that in the limit $\alpha \rightarrow \infty$, the $\alpha$-coherence asymptotically tends towards a constant value, in contrast to a macroscopicity measure~\cite{Lee2011} which will increase with the separation $\alpha$.

We also consider things from the point of view of efficiency. Fig~\ref{fig::compare2} is a numerical plot of the nonclassicality for a given mean particle number. We see that the Fock states are the most nonclassical states on a per particle basis over the region considered, which is again not unexpected due to the granular nature of this form of light.

\begin{figure}[t]
\includegraphics[width=0.3\linewidth]{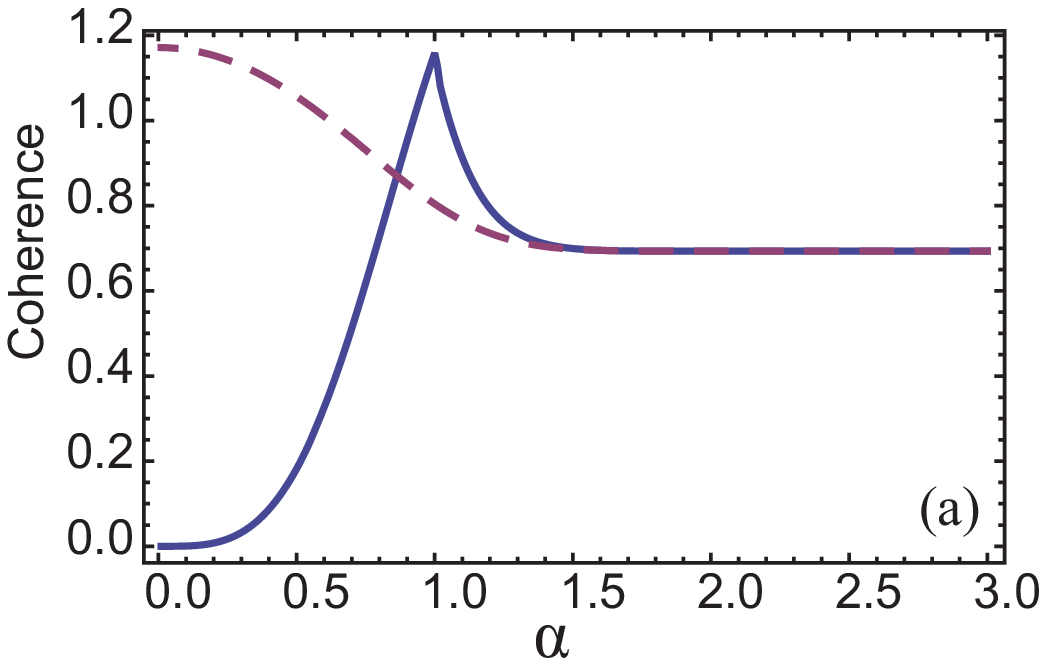}
\includegraphics[width=0.3\linewidth]{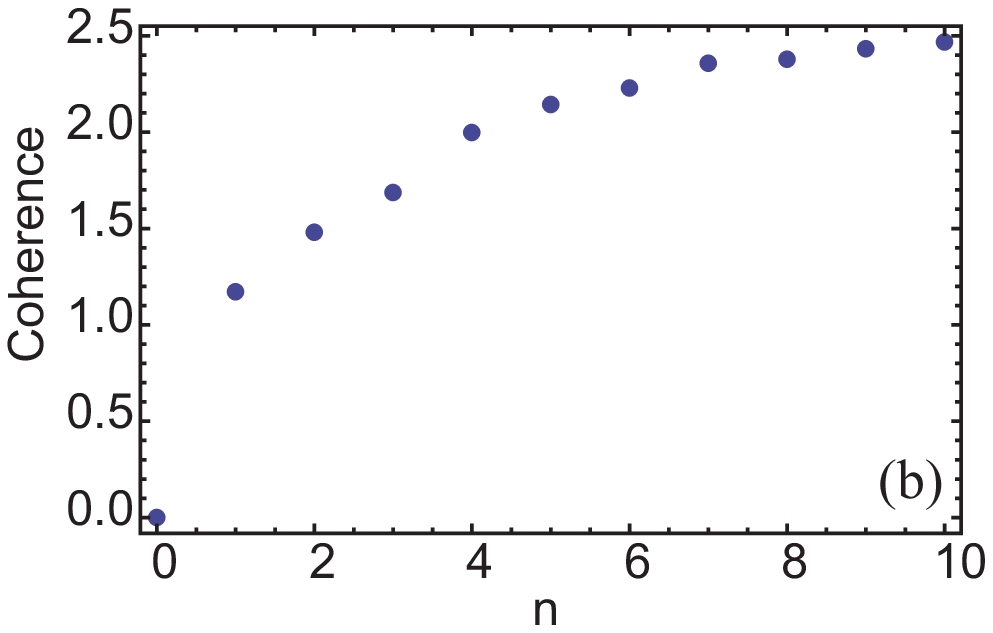}
\includegraphics[width=0.31\linewidth]{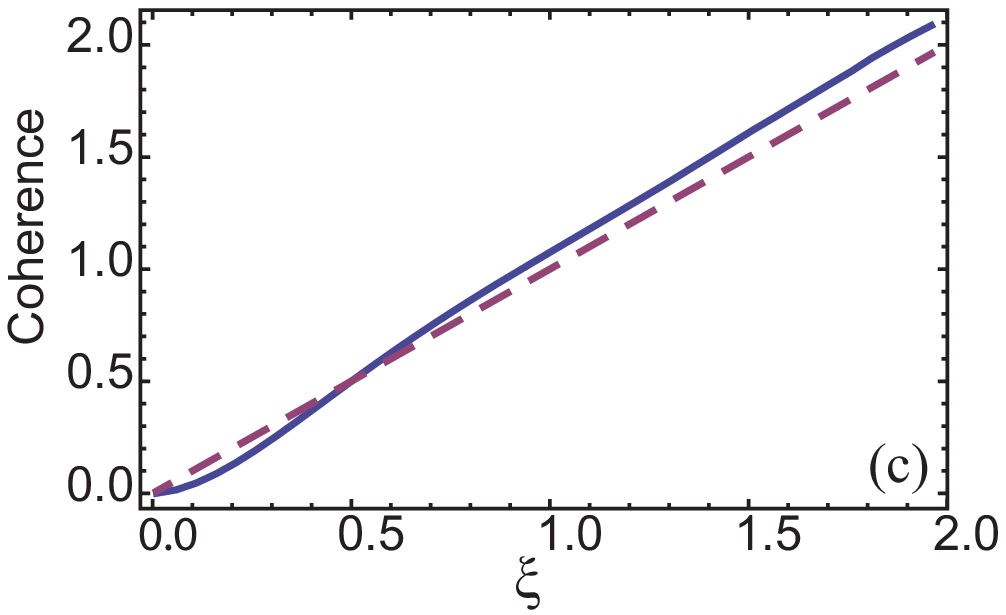}
\caption{Coherence measure ${\cal C}(\rho) = S(\rho_{\rm diag}) - S(\rho)$ for photonic states. (a) Even(solid line) and odd(dotted line) cat states $\ket{\alpha} \pm \ket{-\alpha}$, (b) Fock states $\ket{n}$, and (c) squeezed states $S(\xi) \ket{0}$ are compared. For cat states, the degree of coherence approaches to $\log 2$, which is the maximum coherence for qubit states when $\alpha$ approaches infinity.
Degrees of coherence for Fock states and squeezed states increase as a photon number $n$ and squeezing parameter $\xi$ increase.
}
\label{fig::compare1}
\end{figure}

\begin{figure}[t] 
\includegraphics[width=0.6\linewidth]{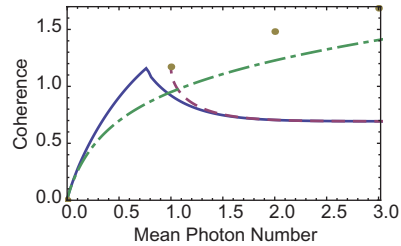}
\caption{Coherence measure  ${\cal C}(\rho) = S(\rho_{\rm diag}) - S(\rho)$ for photonic states. Even(solid line) and odd(dotted line) cat states $\ket{\alpha} \pm \ket{-\alpha}$, Fock states(circular points) $\ket{n}$, and squeezed states(dashed-dotted line) $S(\xi) \ket{0}$ are compared.
Coherence of the states are plotted for given mean photon numbers $\langle n \rangle = \langle a^\dagger a \rangle$.
 For cat states, the degree of coherence approaches to $\log 2$, which is the maximum coherence for qubit states when $\alpha$ approaches infinity.
Degrees of coherence for Fock states and squeezed states increase as a photon number $n$ and squeezing parameter $\xi$ increase, respectively.
}

\label{fig::compare2}
\end{figure}

\section{Other possible measures}

Here, we consider another possible measure of non-classicality based on the negative volume of the P distribution. In the most general case, negativities in the $P$ distribution can come in the form of regular continuous functions, which are directly accessible, or singularities. Suppose we restrict ourselves to the case where the $P$ distribution is a regular continuous function. We can then consider the following:

\begin{definition} [Negativity]
Let the P distribution of the of the state $\rho$ be given by $p(\alpha)$ where $p(\alpha)$ is a regular function. Let $\mathcal{N} = \{ \alpha \mid p(\alpha) \ngeq 0\}$, then then the quantity:

$$\mathcal{C}_- (\rho) = - \int_\mathcal{N}d^2\alpha \, p(\alpha)$$

is called the negativity of the P distribution.

\end{definition}

The following result shows that both the $\alpha$-coherence and the negativity of the P distribution belong to similar resource theories, which further supports the argument that the $\alpha$-coherence is closely related to negativities in the P distribution.

\begin{theorem}
Suppose for some $\rho$, $\mathcal{C}_- (\rho)$ is finite integrable. Then $\mathcal{C}_-$ is a non-classicality measure with respect to the set of states with positive P distributions.
\end{theorem}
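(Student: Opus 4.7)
The plan is to verify the three axioms of Definition~\ref{def::monotonicity} by exploiting the fact that linear optical unitaries act on the Glauber--Sudarshan $P$ distribution as affine symplectic changes of variable on the coherent-state labels, and that partial traces correspond to marginalizing the $P$ distribution. Axiom~1 is immediate: if $p(\alpha)\geq 0$ pointwise then $\mathcal{N}$ carries no mass, while if $\mathcal{C}_-(\rho)=0$ and $p$ is a regular integrable function then $p\geq 0$ almost everywhere, which is the classicality criterion.

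For weak monotonicity I would write $\Phi_L(\rho)=\mathrm{Tr}_E(U_L\,\rho\otimes\sigma_E\,U_L^\dagger)$ and proceed in three steps. First, $\sigma_E$ has nonnegative $P$ distribution $q$, so the product $\rho\otimes\sigma_E$ has $P$ function $p(\alpha)q(\beta)$ whose negative part factorizes as $p_-(\alpha)q(\beta)$; integrating and using $\int q=1$ gives $\mathcal{C}_-(\rho\otimes\sigma_E)=\mathcal{C}_-(\rho)$. Second, because beam splitters, phase shifters, waveplates and displacements all send products of coherent states to products of coherent states via a linear symplectic map $M$ of the labels with unit Jacobian, $U_L$ transforms the joint $P$ distribution by pushforward and preserves the total negative mass. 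Third, partial trace corresponds to marginalization, and the pointwise inequality
$$\Bigl| \int P'(\gamma,\delta)\,d^2\delta\Bigr|_- \;\le\; \int |P'|_-(\gamma,\delta)\,d^2\delta$$
shows that marginalization can only decrease negativity by allowing positive and negative parts of $P'$ to cancel. Concatenating these three inequalities yields $\mathcal{C}_-(\Phi_L(\rho))\le\mathcal{C}_-(\rho)$.

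Strong monotonicity then follows by applying weak monotonicity to the measurement-implementing map, whose output is $\sum_i p_i\,\rho_A^i\otimes\ket{\alpha'_i}\bra{\alpha'_i}$. Because the pointer labels $\alpha'_i$ are mutually orthogonal (hence distinct) coherent states, the negative parts of the different branches have disjoint support in the ancilla label, so the negativity of the output is exactly $\sum_i p_i\,\mathcal{C}_-(\rho^i)$, giving $\sum_i p_i\,\mathcal{C}_-(\rho^i)\le\mathcal{C}_-(\rho)$. Convexity is the easiest axiom: linearity of the $P$ representation gives $p_\rho=\sum_i \lambda_i p_{\rho_i}$, and the triangle inequality $|p_\rho|_-(\alpha)\le\sum_i \lambda_i\,|p_{\rho_i}|_-(\alpha)$ integrates to $\mathcal{C}_-(\rho)\le\sum_i\lambda_i\,\mathcal{C}_-(\rho_i)$.

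The principal obstacle will be the regularity issue. The statement hypothesizes that $\mathcal{C}_-(\rho)$ is finite integrable, but the classical ancilla $\sigma_E$ and the orthogonal pointer coherent states used in the strong-monotonicity argument have highly singular $P$ representations (delta functions, or worse), so the pointwise and marginalization manipulations above have to be justified in a distributional sense, or else the result must be restricted to linear optical maps whose classical ancillas also admit regular, bounded $P$ functions (e.g.\ classical Gaussian mixtures). Ensuring that the output of $\Phi_L$ still possesses a regular $P$ distribution so that $\mathcal{C}_-$ is well defined on it, and that the formal change-of-variable and marginalization steps survive this regularization, is the main technical hurdle in turning the above sketch into a rigorous proof.
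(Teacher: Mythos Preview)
Your proposal is correct and arrives at the same conclusions, but the route differs from the paper's in two places worth noting.

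For weak monotonicity, the paper argues at the channel level: since $\Phi_L$ maps each coherent state to a classical state, one writes $\Phi_L(\ket{\alpha}\bra{\alpha})=\int d^2\alpha'\,r_\alpha(\alpha')\ket{\alpha'}\bra{\alpha'}$ with $r_\alpha\ge 0$, applies the Hahn decomposition to $p(\alpha)$, and bounds the output negativity by $-\int_{\mathcal{N}}p(\alpha)\,d^2\alpha$ directly. Your decomposition into three stages (tensor with classical ancilla, symplectic pushforward under $U_L$, marginalization under partial trace) is a Stinespring-flavoured alternative. The paper's version is more economical in that it never needs the joint multimode $P$ function or the unit-Jacobian change of variables; your version is more modular and makes explicit exactly where negativity can drop (only at the partial trace).

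For strong monotonicity, the paper does not reduce to weak monotonicity. Instead it expands $p_i\rho_i$ explicitly through the dilation, obtaining an integrand of the form $p(\alpha)\,r(\vec{\alpha}')\,\langle\vec{\beta}'|(\openone\otimes\ket{\gamma_i}\bra{\gamma_i})|\vec{\beta}'\rangle$, observes that the last two factors are nonnegative, and bounds $\sum_i p_i\,\mathcal{C}_-(\rho_i)$ by integrating over the negative set $\mathcal{N}$ of $p(\alpha)$ and summing the projector to $\le\openone$. Your reduction via the output state $\sum_i p_i\,\rho^i_A\otimes\ket{\alpha'_i}\bra{\alpha'_i}$ and the disjoint-support factorization is slicker, but as you correctly flag, it forces you to evaluate $\mathcal{C}_-$ on a state whose $P$ function is a sum of delta measures in the pointer variable, so the definition of $\mathcal{C}_-$ as an integral over a regular function does not literally apply there. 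The paper's direct computation sidesteps that particular intermediate state: all negativity bookkeeping is done in the original single-mode variable $\alpha$, and the ancilla and projector contributions enter only as nonnegative weights. In that sense the regularity obstacle you identify is sharper for your route than for the paper's, though both arguments ultimately treat the classical ancilla's $P$ distribution somewhat formally.
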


\begin{proof}
It is obvious that if $p(\alpha)\geq 0 $ for all $\alpha$, then the P distribution is classical and $\mathcal{C}_- (\rho) = 0$. So the first condition is automatically satisfied.

For weak monotonicity, observe that a linear map always maps a state with classical P distribution to another state with classical P distribution. Therefore, we must have $\Phi_L(\ket{\alpha}\bra{\alpha})= \int d^2 \alpha' \; r_\alpha(\alpha')\ket{\alpha'}\bra{\alpha'}$ where $r_\alpha(\alpha')$ is some classical P distribution. Let $\rho = \int d^2\alpha\; p(\alpha)\ketbra{\alpha}{\alpha}$ where $ p(\alpha)d^2\alpha\ $ is a signed measure with density $p(\alpha)$ (the existence of such an expression follows from the fact that the set of finite linear combinations $\sum_{j=0}^{m}c_{j}\ket{\alpha_{j}}\bra{\alpha_{j}}$ is trace norm dense in the set of quantum states \cite{daviesbook}).

By the Hahn decomposition theorem, we define the positive subset $\mathcal{P} = \{ \alpha \mid p(\alpha) \geq 0 \}$ and the negative subset as $\mathcal{N} = \{ \alpha \mid p(\alpha) \le 0 \}$, so $\rho = \left ( \int_\mathcal{P} + \int_\mathcal{N} \right ) d^2\alpha\; p(\alpha)\ketbra{\alpha}{\alpha}$ and $\Phi_L(\rho) =  \left ( \int_\mathcal{P} + \int_\mathcal{N} \right ) d^2\alpha\; p(\alpha) \Phi_L(\ketbra{\alpha}{\alpha}) = \left ( \int_\mathcal{P} + \int_\mathcal{N} \right ) d^2\alpha\; p(\alpha)\int d^2 \alpha' \; r_\alpha(\alpha')\ket{\alpha'}\bra{\alpha'}$. Since $r_\alpha(\alpha')$ is classical and hence always non-negative, we must have that $\mathcal{C}_-(\Phi_L(\rho)) \leq -\int_\mathcal{N}  d^2\alpha\; \int d^2 \alpha' \; p(\alpha) r_\alpha(\alpha') = -\int_\mathcal{N} d^2 \alpha' \; p(\alpha) = \mathcal{C}_-(\rho)$, which proves weak monotonicity.

We now prove strong monotonicity. Recall the definition of a linear optical measurement. Let $\rho_A = \int d^2\alpha\; p(\alpha)\ketbra{\alpha}{\alpha}$ where $p(\alpha)$ may be nonclassical. Any linear optical measurement may be performed by a linear optical unitary operation with classical ancilla:  $ U_L \rho_A \otimes \sigma_{EE'} U^\dag_L  = U_L \int d^2 \alpha \; p(\alpha) \ket{\alpha}_A \bra{\alpha} \otimes \int d^{2M} \alpha' r(\vec{\alpha}') \ket{\vec{\alpha}'}_{EE'} \bra{\vec{\alpha}'}  U^\dag_L = \int d^2 \alpha d^{2M}\vec{\alpha}' p(\alpha)r(\vec{\alpha}') U_L\ket{\alpha, \vec{\alpha}'}_{AEE'}\bra{\alpha, \vec{\alpha'}}U^\dag_L$. $r(\vec{\alpha}')$ is a classical non-negative distribution over $M$ modes. Since $U_L$ is a linear optical unitary, it always maps a product of coherent states to another product of coherent states, so we can write $U_L \ket{\alpha, \vec{\alpha}'}_{AEE'} = \ket{\beta(\alpha, \vec{\alpha}'), \vec{\beta}'(\alpha, \vec{\alpha}')}_{AEE'}$. Since $U_L$ implements a linear optical measurement, there must exist projectors $\ket{\gamma_i}_{E'}\bra{\gamma_i}$ on the subsystem $E'$ such that $p_i \rho_i = \mathrm{Tr}_{EE'} (U_L \rho_A \otimes \sigma_{EE'} U^\dag_L \openone_E\otimes\ket{\gamma_i}_{E'}\bra{\gamma_i})$. Expanding this expression, we get $p_i \rho_i = \int d^2 \alpha d^{2M}\vec{\alpha}' p(\alpha)r(\vec{\alpha}') \bra{\vec{\beta}'(\alpha, \vec{\alpha}')}(\openone_E\otimes\ket{\gamma_i}_{E'}\bra{\gamma_i}) \ket{\vec{\beta}'(\alpha, \vec{\alpha}')}_{EE'} \ket{\beta(\alpha, \vec{\alpha}')}_{A}\bra{\beta(\alpha, \vec{\alpha}')} $. We see that since the terms $r(\vec{\alpha}')$ and $\bra{\vec{\beta}'(\alpha, \vec{\alpha}')}(\openone_E\otimes\ket{\gamma_i}_{E'}\bra{\gamma_i}) \ket{\vec{\beta}'(\alpha, \vec{\alpha}')}_{EE'}$ in the integral are both non-negative, we can upper bound the negativity of $\rho_i$ by simply integrating over the entire negative subset of $p(\alpha)$, regardless of the measurement outcomes $i$. As a result, we can write 

\begin{align*}
\sum_i p_i & \mathcal{C}_- ( \rho_i) \\
&\leq -\int_\mathcal{N} d^2 \alpha d^{2M}\vec{\alpha}' p(\alpha)r(\vec{\alpha}') \times \\ &\;\bra{\vec{\beta}'(\alpha, \vec{\alpha}')}(\openone_E\otimes\sum
_i \ket{\gamma_i}_{E'}\bra{\gamma_i}) \ket{\vec{\beta}'(\alpha, \vec{\alpha}')}_{EE'} \\
&\leq -\int_\mathcal{N} d^2 \alpha d^{2M}\vec{\alpha}' p(\alpha)r(\vec{\alpha}') = \mathcal{C}_-(\rho)
\end{align*} 

where the last inequality is because the sum $\bra{\vec{\beta}'(\alpha, \vec{\alpha}')}(\openone_E\otimes \sum_i \ket{\gamma_i}_{E'}\bra{\gamma_i}) \ket{\vec{\beta}'(\alpha, \vec{\alpha}')}_{EE'}$ is a sum of probability outcomes. This is proves strong monotonicity.

Convexity is also guaranteed. Let $p(\alpha)$ and $q(\alpha)$ be the P distributions of $\rho$ and $\sigma$ respectively. The P distribution of the mixture $r\rho + (1-r) \sigma$ is $rp(\alpha) + (1-r)q(\alpha)$. Since $-\int_\mathcal{N} d^2\alpha (rp(\alpha) + (1-r)q(\alpha)) = -r\int_\mathcal{N} d^2\alpha \, q(\alpha) - (1-r)\int_\mathcal{N} d^2\alpha \,q(\alpha) \leq r \mathcal{C}_- (\rho)+ (1-r)\mathcal{C}_- (\sigma)$, $\mathcal{C}_-$ must be convex. The inequality occurs because the the largest negative set $\mathcal{N}$ for for mixture may be suboptimal for the individual states $\rho$ and $\sigma$.

\end{proof}

Still other possible measures of nonclassicality measures can also be constructed. For instance, we can also consider geometric measures of nonclassicality. Suppose we have some distance measure $D(\rho,\sigma)$ over the Hilbert space that is monotonically decreasing under quantum operations over both its arguments. Then it is immediately clear that the quantity $\inf_{\sigma \in \mathcal{P}^+} D(\rho,\sigma)$, where the optimization is over all classical states, will satisfy at least the weak monotonicity condition laid out in Definition~\ref{def::monotonicity}.

\section{Conclusion.}

In this paper, we described a general procedure that allows us to quantify the superposition amongst any complete set of quantum states, whether they are orthogonal or not. The key insight here is that the scheme laid out by by Baumgratz {\it et al.}~\cite{Baumgratz14} can be generalized via a reasonably motivated orthogonalization procedure. This orthogonalization procedure is then applied to the set of coherent states as a special case and the resulting coherence measure, the $\alpha$-coherence, is shown to identify incoherent states with nonclassical states  in the sense of the Glauber-Sudarshan $P$ distribution. This demonstrates that states with nonclassical $P$ distributions are essentially the limiting case of the same quantum resources identified in~\cite{Baumgratz14}, when the incoherent basis is chosen as the set of coherent states. The $\alpha$-coherence also belongs to a class of resource theoretic nonclassicality measures that we refer to as a linear optical resource theory. This strongly implies that linear optical monotones are appropriate measures of the nonclassicality of light. The results also suggest possible deeper connections between incoherent operations and linear optical elements that opens up potentially interesting new lines of investigation. 

\section{Acknowledgements}
This work was supported by the National Research Foundation of Korea (NRF) through a grant funded by the Korea government (MSIP) (Grant No. 2010-0018295). K.C. Tan and T. Volkoff was supported by Korea Research Fellowship Program through the National Research Foundation of Korea (NRF) funded by the Ministry of Science, ICT and Future Planning (Grant No. 2016H1D3A1938100 and 2016H1D3A1908876). H. Kwon was supported by the Global Ph.D. Fellowship Program through the NRF funded by the Ministry of Education (Grant No. 2012-003435).

\end{document}